\title{Generating Tokenizers with Flat Automata}
\author{Hans de Nivelle 
\institute{School of Engineering and Digital Sciences, \\
           Nazarbayev University, Nur-Sultan City, Kazakhstan}
\email{hans.denivelle@nu.edu.kz}
\and 
Dina Muktubayeva
\institute{School of Engineering and Digital Sciences, \\
           Nazarbayev University, Nur-Sultan City, Kazakhstan}
\email{dina.muktubayeva@nu.edu.kz}}
\begin{document}

\newtheorem{theorem}{Theorem}[section]
\newtheorem{example}[theorem]{Example}
\newtheorem{definition}[theorem]{Definition}
\newtheorem{lemma}[theorem]{Lemma}
\newtheorem{question}[theorem]{Question}
\newtheorem{corollary}[theorem]{Corollary}
\newtheorem{wish}[theorem]{Wish} 
\newtheorem{algorithm}[theorem]{Algorithm}
\newtheorem{exercise}[theorem]{Exercise}

\newcommand{\proof}{{\bf Proof \\}}
\newcommand{\proofend}{{\bf End of Proof \\}}

\newcommand{\pair}{{\rm pair}}
\newcommand{\pow}{{\rm pow}}

\newcommand{\hlijn}[1]{{ \( \overline{\hspace{#1}} \) }}

\newcommand{\een}{{\bf (1) }}
\newcommand{\twee}{{\bf (2) }}
\newcommand{\drie}{{\bf (3) }}
\newcommand{\vier}{{\bf (4) }}
\newcommand{\vijf}{{\bf (5) }}

\newcommand{\Null}{{\rm Null}} 
\newcommand{\First}{{\rm First}}
\newcommand{\Trans}{{\rm Trans}}

\maketitle

\begin{abstract}
   \noindent
   We introduce flat automata for automatic generation of tokenizers. 
   Flat automata are a simple representation of standard finite automata.
   Using the flat representation, automata can be easily constructed, combined and printed.
   Due to the use of border functions,
   flat automata are more compact than standard automata in the case where 
   intervals of characters are attached to transitions, and
   the standard algorithms on automata are simpler. 
   We give the standard algorithms for tokenizer construction with automata,
   namely construction using regular operations, determinization, and minimization.
   We prove their correctness. The algorithms work with intervals of characters,
   but are not more complicated than their counterparts on single characters. 
   It is easy to generate C++ code from the final deterministic automaton. 
   All procedures have been implemented in C++ and are publicly available. 
   The implementation has been used in applications and in teaching. 
\end{abstract}

\section{Introduction}

This paper is part of a project to obtain
a programming language for the implementation of logical algorithms.
Logic is special because its algorithms operate on trees that
have many different forms with different subtypes.
Algorithms need to distinguish the form of 
the tree, and take different actions dependent on this form. 
Intended applications of our language are
parts of theorem provers, or interactive verification systems. 
For the interested reader, we refer to 
(\cite{deNivelle2021b}). 
Part of this project is to obtain a working compiler. 
We have looked at existing tools for the generation
of the parser and the tokenizer, but none of them
fulfilled our needs. 
In particular, there was no bottom-up parser generation
tool available that supports modern $ C^{++}, $ 
and existing tokenizer generation tools
are not flexible enough.  
Existing tokenizer generators like LEX~(\cite{Lex1975sys})
and RE2C~(\cite{Re2c_sys}) generate the complete tokenizer,
which makes them unsuitable for our language. 
Our language uses Python-style indentation,
which requires that the tokenizer must generate
a token when the indentation level changes. 
Detecting a change of indentation level is quite
complicated, and it cannot be represented
by regular expressions. 
Lack of flexibility is a general problem, for example
$ C $ and $ C^{++} $ require that
the tokenizer has access to type information,
so that different tokens can be generated
for identifiers that represent a type name
or a template name. 
$ C^{++} $-11 allows use of \verb+>>+ to close
two template arguments at once
(for example in \verb|std::vector<std::pair<int,int>>|. 
In that case, \verb|>>| must be tokenized as
two separate \verb|>|.
In order to do this correctly,  
the tokenizer needs to know
if the parser is currently parsing a template argument. 

In order to obtain the required flexibility, we created a new implementation
that does not generate the complete tokenizer,
but which only 
cuts the input in small chunks, and classifies them by type.
We discuss details of our implementation 
in Section~\ref{Sect_concl_fut}.
In this paper, we concentrate on the representation
of finite automata used by our implementation. 
We use so-called border functions to represent
interval-based transitions. 
Instead of storing transitions of form
$ ( [ \sigma_1, \sigma_2 ], q ), $ 
(for characters between $ \sigma_1 $ and $ \sigma_2, $ go to 
state $ q $) we store only the points where
the behavior of the transition changes, i.e. the borders,
so instead we store $ ( \sigma_1, q ), ( \sigma+2, \# ), $ 
with $ \# $ denoting 'getting stuck'. 
For every character, the transition is determined by
the greatest border that is not greater than the character itself. 
When implementing transformations on
automata, border functions are much easier to deal with 
than intervals, because there is no need to distinguish
between the beginning and the end of an interval. 
All that needs to be looked at, are the borders.

In addition to the use of border functions,
we store the automata in an array (vector) using 
relative state references. This removes the need to represent 
automata as graphs, and the combinations that correspond to regular
operators become trivial. In most cases, the automata can be just
concatenated with the addition of a few $ \epsilon $ transitions. 

These two modifications result in a representation that is easy to
explain and implement, and whose automata are easy to read. 
This is useful both for teaching and for debugging. Big automata representing complete
tokenizers tend to be local, and our transformations preserve this locality. 

In general, our automaton representation is somewhat more
complicated than the standard representation, 
some of the correctness proofs become a bit more complicated, 
but the operations themselves are equally complicated.
The extra effort in defining the automata and proving
the operations correct pays off when the automata are applied:
The standard representation must be further adapted in order to make
it work in practice, while ours works without further adaptation. 
We have implemented flat automata in $ C^{++}, $ 
and the implementation is available from \cite{CompilerTools}.

In the next section, we will define alphabets and border functions.
In Section~\ref{Sect_acceptor}, 
we define acceptors, which are automata that can only
accept or reject. In Section~\ref{Sect_regexp} we explain
how to obtain acceptors by means of regular operations. 
We do not define regular expressions as separate entities,
instead we directly construct the automata. 
In Section~\ref{Sect_classifier},
we define classifiers, which are obtained by 
pairing acceptors with token names.
In Section~\ref{Sect_determinization} we adapt the standard 
determinization procedure to automata with border functions. The border
functions make it possible to keep the algorithm simple. 
In Section~\ref{Sect_minimization} we adapt state minimization to
our representation of automata. The algorithm can be
kept simple (as simple as for single characters) because
of the border functions. We use 
Hopcroft's algorithm (\cite{Hopcroft_min_1971}) with an adaptation 
of a filter from \cite{Efficient_depth2016}. 
In Section~\ref{Sect_concl_fut} we draw some conclusions,
and sketch possibilities for future work.  

\section{Preliminaries}

We will assume that alphabets are well-ordered sets. 
In the usual case where the alphabet is finite, it is sufficient
that there exists a total order on the alphabet. 
\begin{definition}
   \label{Def_alphabet} 
   An \emph{alphabet} is a pair $ ( \Sigma, < ), $ s.t.
   $ \Sigma $ is a non-empty set, and $ < $ is a well-order
   on $ \Sigma. $ 
   We define $ c_{\bot} = \min(\Sigma). $ 
   If $ \{ c' \in \Sigma | \ c < c' \} $ is non-empty, then we write 
   $ c^{+1} $ for $ \min \{ \, c' \in \Sigma \ | \ c < c' \, \}. $ 
\end{definition}
As far as we know, all alphabets in use, including ASCII
and Unicode (\cite{Unicode})
satisfy the requirements of Definition~\ref{Def_alphabet}
or can be adapted in such a way that they do.

Our aim is to define automata by means of intervals,
because in practice, many tokens 
(like for example numbers or identifiers) 
use intervals in their definitions.
Another advantage of use of intervals is that
is becomes possible to use large alphabets, like Unicode.

Dealing with intervals becomes easier if one removes
the distinction between start and end of interval. 
This can be done by storing only the points where
a new value starts, 
and creating a special value $ \# $ denoting 'not in any interval'. 
For example, when defining
identifiers, one may want to define a transition
to some state $ q, $ 
for $ \sigma \in \{ A, \ldots, Z \} \cup \{ a, \ldots, z \}, $ 
because all letters usually behave the same. 
This can be represented as 
$ \{ \, (A,q), \, ( Z+1, \# ), \,
        (a,q), \, ( z+1, \# ) \ \}. $ 
Here $ A, Z+1, a, z+1 $ are the borders where the behavior changes.
In order to determine the transition for a given symbol 
one needs to find the largest border that is not greater
than the symbol at hand. 
We will call a function, that is defined in this way,
a border function.

\begin{definition}
   \label{Def_borderfunc}
   Let $ ( \Sigma, < ) $ be an alphabet, let $ D $ be an arbitrary,
   non-empty set.
   A \emph{border function} $ \phi $ on $ (\Sigma,<) $ is a partial function
   from $ \Sigma $ to $ D, $ 
   defined for a finite subset of $ \Sigma, $ 
   but at least for $ c_{\bot}. $ 
   We write $ {\rm dom}(\phi) $ for the set of symbols 
   for which $ \phi $ is defined. 
   We call the set $ D $ the \emph{range} of $ \phi. $ 
   We will write border functions as sets of ordered pairs,
   whenever it is convenient. 
\end{definition}

\begin{definition}   
   For a given $ \sigma \in \Sigma, $ we first define 
   $ \sigma^{\leq} = \max \{ \, \sigma' \in {\rm dom}(\phi) \ | \ 
   \sigma' < \sigma \mbox{ or } \sigma' = \sigma \, \}. $ 
   After that, we define $ \phi^{\leq}(\sigma) = \phi( \sigma^{\leq} ). $ 
\end{definition}
It can be easily checked that 
$ \phi^{\leq}(\sigma) $ always exists and is uniquely defined,
because $ \phi $ is finite and has $ c_{\bot} $ in its domain. 

\begin{definition}
   \label{Def_bord_minimal} 
   Let $ \phi_1 $ and $ \phi_2 $ be two border functions
   on the same alphabet $ ( \Sigma, < ). $ 
   We say that $ \phi_1 $ and $ \phi_2 $ are \emph{equivalent}
   if for all 
   $ \sigma \in \Sigma, $ \ 
   $ \phi^{\leq}_1(\sigma) = \phi^{\leq}_2(\sigma). $ 

   \noindent
   We call $ \phi $ \emph{minimal} if there exists
   no equivalent $ \phi' \subset \phi. $ 
   We define \emph{the minimization of} $ \phi $ as the 
   $ \subseteq $-minimal border function that is equivalent
   to $ \phi. $ 
\end{definition}
Definition~\ref{Def_bord_minimal} uses the
fact that border functions can be viewed as sets of ordered pairs.
It can be easily checked that border functions can be minimized.
If $ \phi(\sigma_1) = \phi(\sigma_2), $ and there is no 
$ \sigma' $ with $ \sigma_1 < \sigma' < \sigma_2 $ in the domain
of $ \phi, $ then $ \phi(\sigma_2) $ can be removed from $ \phi. $ 

If for example both $ \phi(1) = \phi(3) = 4, $ 
and $ 2 $ is not in the domain of $ \phi, $ then removing $ 3 $ 
from the domain will 
not have effect on $ \phi^{\leq}. $ 
Assume that $ \Sigma = \{ -100, -99, \ldots, 99, 100 \}. $ 
Assume that 
$ \phi(-100) = -1, $ \ \ 
$ \phi(-4) = 3, \ \ \phi(2) = 8, $ and $ \phi(6) = 4, $ then
$ \phi^{\leq}(-100) = \phi^{\leq}(-3) = -1, $ \ \
$ \phi^{\leq}( -4) = \phi^{\leq}(1) = 3, $ \ 
$ \phi^{\leq}(2) = \phi^{\leq}(5) = 8, $ and 
$ \phi^{\leq}(6) = \phi^{\leq}(100) = 4. $  

\begin{definition}
   \label{Def_border_product}
   Let $ \phi_1 $ and $ \phi_2 $ be two border functions
   over the same alphabet $ (\Sigma,< ). $ 
   Let $ D_1 $ be the range of $ \phi_1 $ and let $ D_2 $ be the
   range of $ \phi_2. $ 
   We define the \emph{product} 
   $ \phi_1 \times \phi_2 $ as the border function 
   \[ \{ \ ( \ \sigma, ( \phi_1^{\leq}(\sigma), \phi_2^{\leq}(\sigma) ) \ ) 
        \ | \
        \sigma \in {\rm dom}(\phi_1) \cup {\rm dom}(\phi_2) \ \}. \]
   The range of $ \phi_1 \times \phi_2 $ is 
   $ D_1 \times D_2. $ 
\end{definition}

\begin{definition}
   \label{Def_border_appl}
   Let $ \phi $ be a border function over alphabet $ ( \Sigma, < ). $
   Let $ D $ be the range of $ \phi. $ 
   Let $ f $ be a function from $ D $ to some set $ D'. $  
   We define \emph{the application of} $ f $ \emph{on} $ \phi $
   as the minimization of
   \[ \{ \ ( \, \sigma, f( \phi( \sigma )) \, ) \ | \ 
                 \sigma \in {\rm dom}(\phi) \ \}. \]
   We will write $ f(\phi) $ for the application of $f$ on $ \Phi. $ 
\end{definition}

\section{Acceptors}
\label{Sect_acceptor}

We distinguish two types of automata which we call 
\emph{acceptor} and \emph{classifier}.
An acceptor can only accept or reject a word, while
a classifier is able to classify words. 
A complete tokenizer is a classifier, while single tokens are defined
by acceptors. A classifier is obtained by associating 
acceptors with token classes. 

Although acceptors can be directly defined in code
through initializers,  
it is inconvenient to do this, and we will construct them 
from regular expressions. We do not view regular expressions
as independently existing objects. Instead we view regular operators
as operators that work directly 
on acceptors. We have no data structure for regular expressions.

Acceptors are standard finite automata. We represent them in such a way that
the regular operations are easy to present and to implement.
In order to obtain this, we use a flat, linear representation 
which we will introduce shortly. 
In the literature, finite automata are traditionally represented by 
graphs whose vertices are states and whose
edges are labeled with symbols. 
(See for example \cite{Dragon2007,Sipser2013}).
This is implementable, but we believe
that our representation is simpler. There is no problem of memory
management, and printing automata is easy. When we print 
an automaton, the states are printed absolute instead of relative.

\begin{definition}
   \label{Def_acceptor}
   Let $ ( \Sigma, < ) $ be an alphabet. 
   An \emph{acceptor} $ {\cal A} $ over $ \Sigma $ is a finite sequence 
   \[ {\cal A} =
        (\Lambda_1, \phi_1), \ldots, (\Lambda_n, \phi_n) \ \ ( n \geq 0 ), \] 
   where each $ \Lambda_i \subseteq {\cal Z}, $ and each
   $ \phi_i $ is a border function from 
   $ \Sigma $ to $ {\cal Z} \cup \{ \# \}. $

   Each $ \Lambda_i $ denotes the set of epsilon transitions
   from state $ i, $ while 
   each $ \phi_i $ represents the set of non epsilon transitions from
   state $ i. $ 
 
   We call $ {\cal A} $ \emph{deterministic} if all $ \Lambda_i $ are empty.
   We often write $ \| {\cal A} \| $ instead of $ n $ for the
   size of $ {\cal A}. $ 
\end{definition} 

\noindent
We use the following conventions:
\begin{itemize}
\item
   $ \# $ means that no transition is possible. 
   Note that $ \phi(\sigma ) = \# $ should not be confused with
   '$\phi( \sigma )$ is undefined'.
   Due to the use of border functions, one has to explicitly state 
   that $ \phi(\sigma) $ has no transition, because otherwise
   $ \phi^{\leq}( \sigma ) $ would 'inherit' a transition
   from a $ \sigma' < \sigma. $ 
\item
   The initial state is always $ 1, $ and 
   the accepting state is always $ n+1, $ just outside of
   the acceptor.
\item
   State references in a $ \Lambda_i $ or $ \phi_i $ 
   are always relative to $ i. $ 
   That means that $ i $ itself is represented by $ 0, $ 
   $ i+1 $ is represented by $ 1, $ while  
   $ i-1 $ is represented by $ -1, $ etc. 
\item
   There are no transitions to states $ < 2 $ or states $ > n +1. $ 
\end{itemize}
Note that the last condition stipulates that the acceptor
cannot return to the first state during a run. 
Most of the constructions for combining acceptors 
become simpler with this condition. 
Forbidding transitions to the initial state of an automaton 
is common in the literature,
see for example \cite{Autotheory2006}.
An automaton with this property is usually called \emph{committing}.

In addition, it is usually required that there is exactly one accepting
state, and that there are no transitions going out of the accepting state.
These conditions are automatically fulfilled by our representation. 

\noindent
Acceptors can be non-deterministic, but all
non-determinism must be inside the $ \Lambda_i, $ 
i.e. in the form of $ \epsilon $-transitions. 
All acceptors constructed by the regular operations of 
Section~\ref{Sect_regexp} have this form. 
If one wants to represent a general non-deterministic automaton,
one has to remove transitions from the same state with overlapping intervals.
For example, a state can have transitions to different states
for the intervals $ [a, \ldots, z], $ and $ [ a, \ldots, d]. $ 
In this case, the original state can be split into two states
connected by an $ \epsilon $-transition. 
During this process,
the number of states and $ \epsilon $-transitions can increase, but it
will not become more than the total number of borders in the step
functions of the original automaton. 

We will now formally define when $ {\cal A} $ accepts a word $ w. $ 
\begin{definition}
   \label{Def_run} 
   Let $ {\cal A} $ be an acceptor over alphabet $ \Sigma. $ 
   We define a \emph{configuration} of $ {\cal A } $ as a pair $ ( z, w ), $ 
   with $ 1 \leq z \leq \| {\cal A} \| $ and $ w \in \Sigma^{*}. $ 

   \noindent
   We define the \emph{transition relation} $ \vdash $ between
   configurations as follows: 
   \begin{itemize}
   \item
      If $ j \in \Lambda_i $ and $ w \in \Sigma^{*}, $ then 
      $ ( i, w ) \vdash ( i+j, w ). $ 
   \item
      If $ w \in \Sigma^{*}, \ \ \sigma \in \Sigma, $ and
      $ \phi_i^{\leq}(\sigma) = j $ with $ j \not = \#,  $ then
      $ (i, w ) \vdash ( i+j, w \sigma ), $ 
      where $ \phi^{\leq}_i(\sigma) $ is the border function of state $ i $ 
      applied on $ \sigma. $ 
   \end{itemize}

   \noindent
   We define $ \vdash^{i} $ and $ \vdash^{*} $ between configurations
   as usual. 
   
   \noindent
   We say that $ {\cal A} $ \emph{accepts}
   $ w \in \Sigma^{*} $ 
   if $ ( 1, \epsilon ) \vdash^{*} ( \, \| {\cal A} \| + 1, w \, ). $

   \noindent
   We write $ {\cal L}({\cal A}) $ for the language
   $ \{ w \in \Sigma^{*} \ | \ {\cal A} \mbox{ accepts } w \, \}. $ 
\end{definition} 
\begin{example}
   \label{Ex_ident}
   We give an acceptor that accepts standard identifiers
   (starting with a letter, followed by zero or more letters, digits,
    or underscores).
   The first column, which numbers the states, is not part
   of the automaton.
   \[ 
   \begin{array}{lll}
      1: \ \ & \{ \ \} \hspace*{0.3cm} & 
              \{ \, (c_{\bot},\# ), \, 
               (A,1), \, (Z^{+1}, \# ), \, ( a,1 ), \, ( z^{+1}, \# ) \, \} \\
      2: & \{ 1 \} & \{ \, (c_{\bot}, \# ), \, 
                      (0,0), \, (9^{+1}, \# ), \, 
                      ( A,0 ), \, ( Z^{+1}, \# ), \,
                      ( {\_},0 ), \, ( {\_}^{+1}, \# ), \, 
                      (a,0), \, (z^{+1}, \# ) \, \} \\ 
   \end{array}
   \]
   The initial state is $ 1. $ 
   From state $ 2, $ there is one epsilon transition to state $ 2+1 = 3, $ 
   which is the accepting state. 
   If $ \sigma \in \{ a, \ldots, z \} \cup \{ A, \ldots, Z \} \cup \{ \_ \}, $
   there is a transition from state $ 2 $ to state $ 2 + 0 = 2. $ 
\end{example}

\begin{example}
   \label{Ex_while}
   The following acceptor accepts the reserved word "while".
   The accepting state is $ 6. $ 
   \[
      \begin{array}{lll}
         1: \ \ \ & \{ \ \} \ \ \ \ \ & 
                   \{ \, (c_{\bot}, \#), (w, 1), ( w^{+1}, \# ) \, \} \\
         2: & \{ \ \} & 
                   \{ \, (c_{\bot},\#), (h, 1), ( h^{+1}, \#) \, \} \\
         3: & \{ \ \} & 
                   \{ \, (c_{\bot},\#), (i, 1), ( i^{+1}, \# ) \, \} \\
         4: & \{ \ \} & 
                   \{ \, (c_{\bot},\#), (l, 1), ( l^{+1}, \# ) \, \} \\
         5: & \{ \ \} & 
                   \{ \, (c_{\bot},\#), (e, 1), ( e^{+1}, \# ) \, \} \\
      \end{array}
   \]
\end{example}
It may seem from 
Examples~\ref{Ex_ident} and \ref{Ex_while} that acceptors can be easily
written by hand, but unfortunately that is not the case in general, 
because one needs to know
the order of the alphabet. One must remember that upper case
letters come before lower case letters in ASCII, and the relative
positions of special symbols. 
We initially thought that it would be doable, but it turned
out impossible to write non-trivial acceptors by hand. 
Despite this, automata are easily readable if one prints
the states in transitions as absolute, 
and uses the following printing
convention: In the transition function,
pairs of form $ ( \sigma, \# ) $ where $ \sigma $ 
is the successor of a symbol $ \tau, $ 
are printed in the form 
$ (\tau^{+1}, \# ). $ 
Without this convention, for example 
$ (A,0), (Z^{+1}, \# ) $ would be printed as
$ (A,0), ( \, [, \# ), $ which is a bit hard to read.

\section{Obtaining Acceptors by Regular Operations} 
\label{Sect_regexp}

As explained below Example~\ref{Ex_while},
writing down acceptors directly by hand is unpractical.
The standard approach in the literature 
and in existing
systems, is to obtain automata by means of regular expressions
(\cite{Dragon2007,Sipser2013}).
We follow this approach, but 
we will not view regular expressions as independently existing objects.  
Rather we 
define a set of regular operators on automata that construct 
acceptors at once. 

\begin{definition}
   \label{Def_acc_simple} 
   Let $ ( \Sigma, < ) $ be an alphabet. 
   In the current definition, we will construct 
   border functions with range $ \{ {\bf f}, {\bf t} \}. $ 
   We define 
   $ \phi_{\emptyset} = \{ ( \sigma_{\bot}, {\bf f} ) \}, $ and 
   $ \phi_{\Sigma} = \{ ( \sigma_{\bot}, {\bf t} ) \}. $ 
   We define 
   \[
      \phi_{\geq \sigma} = 
      \mbox{ if } ( \sigma = \sigma_{\bot} ) \mbox{ then } 
          \{ ( \sigma_{\bot}, {\bf t} ) \} \mbox{ else }
          \{ ( \sigma_{\bot}, {\bf f} ), ( \sigma, {\bf t} ) \}. 
   \]
   For $ \sigma \in \Sigma, $ let 
   $ C_{> \sigma} = \{ \sigma' \in \Sigma \ | \ \sigma' > \sigma \ \}, $ 
   the set of symbols greater than $ \sigma. $ 
   We define
   \[
      \phi_{\leq \sigma} =
      \mbox{ if } ( C_{> \sigma} = \emptyset ) \mbox{ then }
         \{ ( \sigma_{\bot}, {\bf t} ) \} \mbox{ else } 
         \{ ( \sigma_{\bot}, {\bf t} ), 
            ( {\rm min}(C_{> \sigma}), {\bf f} ) \} \\
   \]
   We define $ \phi_1 \cap \phi_2 = I( \phi_1 \times \phi_2 ), $ with
   $   I( \, ( d_1, d_2 ) \, ) = \mbox{ if } 
            ( d_1 = {\bf t} \mbox{ and } d_2 = {\bf t} ) \mbox{ then }
            {\bf t} \mbox{ else } {\bf f}, $
   and we define $ \neg \phi = N(\phi), $ with 
   $ 
      N(d) = \mbox{ if } ( d = {\bf t} )
         \mbox{ then } {\bf f} \mbox{ else } {\bf t}. $ 
   Other Boolean combinations, like $ \phi_1 \cup \phi_2, $ and
   $ \phi_1 \backslash \phi_2 $ can be defined analogously.
\end{definition}

\begin{definition}
   \label{Def_acc_from_borderfunc} 
   We define the following ways of constructing
   acceptors over $ \Sigma: $ 
   \begin{itemize}
   \item
      The acceptor $ {\cal A}_{\epsilon}, $ which accepts
      exactly the empty word, is defined as $ ( \, ). $ 
   \item
      Let $ f_{\#} $ be the function defined from 
      $ f_{\#}( {\bf f} ) = \#, $ and
      $ f_{\#}( {\bf t} ) = 1. $ 
      Then, if $ \phi $ is a border function with range 
      $ \{ {\bf f}, {\bf t} \}, $ 
      we define $ {\cal A}[ \phi ] $ as the acceptor 
      $ ( \, ( \{ \, \}, f_{\#}( \phi ) \, ). $ 
      (We are using Definition~\ref{Def_border_appl}.)
   \end{itemize}
\end{definition}
$ {\cal A}[ \phi ] $ accepts exactly the symbols (as words)
for which $ \phi^{\leq} $ returns $ {\bf t}. $ 
Using $ {\cal A}[ \phi ], $ it is easy to construct
acceptors for Boolean combinations of intervals. 
For example, an acceptor that accepts exactly letters can
be defined as $ {\cal A}[ \ ( \phi_{\geq a} \cap \phi_{\leq z} ) \cup
                            ( \phi_{\geq A} \cap \phi_{\leq Z} ) \ ]. $ 
An acceptor that accepts all letters except X can be defined as
$ {\cal A}[ \ \phi_{\Sigma} \cap \neg ( \phi_{\geq X} \cap \phi_{\leq X} ) \ ]. $
The acceptor that accepts nothing can be defined as 
$ {\cal A}_{\emptyset} = {\cal A}[ \phi_{\emptyset} ]. $ 

\begin{definition}
   \label{Def_operations}
   Let $ {\cal A} =  ( \Lambda_1, \Phi_1 ), \ldots, (\Lambda_n, \Phi_n) $ 
   and $ {\cal A}' = 
             ( \Lambda'_1, \Phi'_1 ), \ldots, (\Lambda'_n, \Phi'_{n'}) $
   be acceptors. We define the \emph{concatenation} 
   $ {\cal A} \circ {\cal A}' $ as $ 
     ( \Lambda_1, \Phi_1 ), \ldots, ( \Lambda_n, \Phi_n), 
     ( \Lambda'_1, \Phi'_1 ), \ldots, (\Lambda'_{n'}, \Phi'_{n'} ). $ 
\end{definition}
Operation $ \circ $ simply concatenates acceptors.

\begin{theorem}
   \label{Theorem_concatenation}
   Let $ {\cal A}_1 $ and $ {\cal A}_2 $ be acceptors. 
   $ {\cal L} ({\cal A}_1 \circ {\cal A}_2) = 
   \{ \, w_1 w_2 \ | \ w_1 \in {\cal L}({\cal A}_1) \; and \; 
                      w_2 \in {\cal L}({\cal A}_2) \, \} $.
\end{theorem}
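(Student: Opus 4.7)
The plan is to establish a bijection between accepting runs of $\mathcal{A}_1 \circ \mathcal{A}_2$ on $w$ and pairs consisting of an accepting run of $\mathcal{A}_1$ on some prefix $w_1$ and an accepting run of $\mathcal{A}_2$ on the matching suffix $w_2$. Two ingredients from the earlier development do the work: (a) the \emph{relative} state references in Definition~\ref{Def_acceptor}, which make the transitions of each component behave identically inside the concatenation, and (b) the committing property, namely that transitions never point to states $<2$ or $>n+1$.

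First I would prove a shift-invariance lemma. In $\mathcal{A}_1 \circ \mathcal{A}_2$, position $i$ with $1\leq i\leq n_1$ inherits the pair $(\Lambda_i,\phi_i)$ of $\mathcal{A}_1$ verbatim, and position $n_1+i$ with $1\leq i\leq n_2$ inherits the $i$-th pair of $\mathcal{A}_2$ verbatim. Since Definition~\ref{Def_run} uses relative offsets only, a short induction on the number of steps gives: (i) $(i,u)\vdash^{k}_{\mathcal{A}_1}(i',u')$ holds exactly when $(i,u)\vdash^{k}_{\mathcal{A}_1\circ\mathcal{A}_2}(i',u')$ does, provided the $\mathcal{A}_1$-run stays in $\{1,\dots,n_1+1\}$; and (ii) $(i,u)\vdash^{k}_{\mathcal{A}_2}(i',u')$ holds exactly when $(n_1+i,xu)\vdash^{k}_{\mathcal{A}_1\circ\mathcal{A}_2}(n_1+i',xu')$ does, for every prefix $x$.

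The $\supseteq$ direction is then routine. Given accepting runs of $\mathcal{A}_1$ on $w_1$ and of $\mathcal{A}_2$ on $w_2$, part (i) turns the former into a concatenation-run from $(1,\epsilon)$ to $(n_1+1,w_1)$, and part (ii) with $x = w_1$ turns the latter into a concatenation-run from $(n_1+1,w_1)$ to $(n_1+n_2+1,w_1 w_2)$; transitivity of $\vdash^{*}$ gives acceptance of $w_1 w_2$. For $\subseteq$, I would use the committing property to show that every accepting run of $\mathcal{A}_1\circ\mathcal{A}_2$ on $w$ passes through state $n_1+1$ exactly once: transitions from $\{1,\dots,n_1\}$ land in $\{2,\dots,n_1+1\}$, so the run cannot jump past $n_1+1$, and transitions from $\{n_1+1,\dots,n_1+n_2\}$ land in $\{n_1+2,\dots,n_1+n_2+1\}$, so once the run has left $n_1+1$ it returns neither to it nor to the $\mathcal{A}_1$-part. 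Let $(n_1+1,w_1)$ be that unique configuration and write $w = w_1 w_2$; splitting the run there and applying (i) and (ii) backwards yields accepting runs of $\mathcal{A}_1$ on $w_1$ and of $\mathcal{A}_2$ on $w_2$.

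I expect the main obstacle to be exactly the shift-invariance lemma. Bookkeeping the word-component of configurations under the state shift by $n_1$ forces the induction hypothesis to quantify over an arbitrary prefix $x$; without that universal quantifier the induction does not glue when combining an $\mathcal{A}_1$-prefix-run with an $\mathcal{A}_2$-suffix-run, because the $\mathcal{A}_2$-run starts from the empty word while the concatenation-run at the splice point already carries $w_1$. Once the lemma is formulated with the $x$ in place, both inclusions reduce to one-line assemblies of runs.
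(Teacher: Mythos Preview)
Your argument is correct and follows essentially the same route as the paper's proof: both use the committing bound on transitions to locate the splice point $n_1+1$, and both exploit the relativity of state references (together with the freedom to prepend a fixed word to every configuration) to transport runs between the components and the concatenation. Your shift-invariance lemma with the universally quantified prefix $x$ makes explicit precisely what the paper handles informally with the phrases ``can be easily modified into'' and ``since transitions are relative''.
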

\begin{proof}
Throughout the proof,
we define $ n_1 = \| {\cal A}_1 \| $ and $ n_2 = \| {\cal A}_2 \|. $ 

Let $w \in {\cal L}( {\cal A}_1 \circ {\cal A}_2 ). $ 
By definition, $ (1,\epsilon) \vdash^{*} ( n_1 + n_2 + 1, w ). $ 
There exists at least one prefix $ w' $ of $ w, $ s.t.
$ ( 1, \epsilon ) \vdash^{*} ( n',w' ) \vdash^{*} ( n_1 + n_2 + 1, w ) $ 
having $ n' > n_1 $ because $ w $ itself satisfies this condition.
Let $ w_1 $ be the smallest such prefix.
By the last condition of Definition~\ref{Def_acceptor}, 
$ n' $ must be equal to
$ n_1+1, $ hence $ w_1 \in {\cal L}( {\cal A}_1 ). $
Let $ w_2 $ be the rest of $ w, $ so we have $ w = w_1 w_2. $ 
Because $ ( n_1 + 1, w_1 ) \vdash^{*} ( n_1 + n_2 + 1, w ), $ 
it follows that $ ( n_1+1, \epsilon ) \vdash^{*} ( n_1 + n_2 + 1, w_2 ). $
Note that this sequence still uses $ {\cal A}_1 \circ {\cal A}_2. $ 
Since $ {\cal A}_2 $ has no transitions to states $ < 2, $ 
and all transitions originate from $ {\cal A}_2, $ the configurations
$ (n'',w'') $ in the sequence 
$ ( n_1 + 1, \epsilon ) \vdash^{*} ( n_1 + n_2 + 1, w_2 ) $ 
must have $ n'' \geq n_1 + 1 $. Since transitions are relative,
we have $ ( 1, \epsilon ) \vdash^{*} ( n_2 + 1, w_2 ) $
in $ {\cal A}_2. $ 

Now assume that $ w_1 \in {\cal L}( {\cal A}_1 ) $ and 
$ w_2 \in {\cal L}( {\cal A}_2 ) $. 
We have $ (1,\epsilon) \vdash^{*} (n_1,w_1) $ in $ {\cal A}_1, $ 
and $ (1,\epsilon \vdash^{*} (n_2,w_2) $ in $ {\cal A}_2. $
The second sequence can be easily modified into
$ ( n_1+1, \epsilon ) \vdash^{*} (n_1+n_2+1, w_2 ) $ in 
$ {\cal A}_1 \circ {\cal A}_2, $
which in turn can be modified into
$ (n_1+1,w_1) \vdash^{*} (n_1 +n_2+1, w_1 w_2 ) $ 
in $ {\cal A}_1 \circ {\cal A}_2. $ 
\end{proof}

\begin{definition}
   \label{Def_union}
   We first define an operation that adds $ \epsilon $ transitions
   to an acceptor.
   Let $ {\cal A} = ( \Lambda_1, \Phi_1 ), \ldots, $ $ (\Lambda_n, \Phi_n) $
   be an acceptor.
   We define $ {\cal A} \{ i \rightarrow^{\epsilon} j \} $ 
   as 
   $ ( \Lambda_1, \Phi_1 ), \ldots, 
     ( \Lambda_i \cup \{ j - i \}, \Phi_i ), \ldots, 
     ( \Lambda_n, \Phi_n ). $ 
   We add $ j-i $ instead of just $ j $ to $ \Lambda_i $ because transitions 
   are relative.

   \noindent
   The \emph{union} $ {\cal A}_1 \, | \, {\cal A}_2 $ of
   $ {\cal A}_1 $ and $ {\cal A}_2 $ is defined as 
   \[ ( \, {\cal A}_1 \circ {\cal A}_{\emptyset} 
           \circ
           {\cal A}_2 \, ) \{ \ 
            \ 1 \rightarrow^{\epsilon} \| {\cal A}_1 \| + 2, \ \ \allowbreak
            \| {\cal A}_1 \| + 1 \rightarrow^{\epsilon}  
            \| {\cal A}_1 \| + \| {\cal A}_2 \| + 2 \ \}. \]
\end{definition}
In this definition, we use $ {\cal A}_{\emptyset} $ as defined
below Definition~\ref{Def_acc_from_borderfunc}, namely 
$ {\cal A}_{\emptyset} = ( \{ \, \}, \{ ( c_{\bot}, \# ) \} ). $ 
We prove that union behaves as expected: 
\begin{theorem}
\label{Theorem_union}
   For every two acceptors $ {\cal A}_1$ and ${\cal A}_2, $ we have
   $ {\cal L} ({\cal A}_1 \, | \, {\cal A}_2) = 
   {\cal L} ({\cal A}_1) \cup {\cal L}({\cal A}_2)  $.
\end{theorem}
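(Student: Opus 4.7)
The plan is to prove both inclusions by analyzing the structure of runs through the combined acceptor. Throughout, let $ n_1 = \| {\cal A}_1 \| $ and $ n_2 = \| {\cal A}_2 \|, $ so that $ {\cal A}_1 \circ {\cal A}_{\emptyset} \circ {\cal A}_2 $ has $ n_1 + 1 + n_2 $ states with accepting state $ n_1 + n_2 + 2. $ These states split naturally into three regions: states $ 1, \ldots, n_1 $ inherited from $ {\cal A}_1, $ the single state $ n_1 + 1 $ contributed by $ {\cal A}_{\emptyset}, $ and states $ n_1 + 2, \ldots, n_1 + n_2 + 1 $ inherited from $ {\cal A}_2. $ The two added $ \epsilon $-transitions connect state $ 1 $ to state $ n_1 + 2 $ and state $ n_1 + 1 $ to the accepting state $ n_1 + n_2 + 2. $

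For the reverse inclusion, $ {\cal L}({\cal A}_1) \cup {\cal L}({\cal A}_2) \subseteq {\cal L}({\cal A}_1 \, | \, {\cal A}_2), $ the argument is direct. If $ w \in {\cal L}({\cal A}_1) $ has accepting run $ (1, \epsilon) \vdash^{*} (n_1 + 1, w) $ in $ {\cal A}_1, $ the same sequence of transitions remains valid in the union, since states $ 1, \ldots, n_1 $ of the union inherit all their transitions from $ {\cal A}_1 $ (adding the new $ \epsilon $-transition out of state $ 1 $ does not remove any existing transition). Appending the new $ \epsilon $-transition from state $ n_1 + 1 $ to the final accepting state completes the run. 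If $ w \in {\cal L}({\cal A}_2), $ we prepend the other new $ \epsilon $-transition from state $ 1 $ to state $ n_1 + 2, $ and then shift the indices of $ {\cal A}_2 $'s accepting run by $ n_1 + 1, $ which is legitimate because all transitions are stored relatively.

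For the forward inclusion, the key observation is that state $ 1 $ is never revisited during a run: the committing property of $ {\cal A}_1 $ forbids transitions to state $ 1 $ from inside $ {\cal A}_1 $'s region, and no other region has transitions back to state $ 1. $ Hence the first transition taken from state $ 1 $ determines the overall shape of the run. Case (i): the first transition is the new $ \epsilon $-transition to state $ n_1 + 2. $ Then $ {\cal A}_2 $'s committing property together with the relative nature of transitions implies that states inherited from $ {\cal A}_2 $ have outgoing transitions only to other such states or to the accepting state; hence the remainder of the run stays inside $ {\cal A}_2 $'s region, and after the reverse index shift by $ n_1 + 1 $ it becomes an accepting run of $ {\cal A}_2 $ on $ w, $ giving $ w \in {\cal L}({\cal A}_2). $

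Case (ii): the first transition is inherited from $ {\cal A}_1, $ landing inside states $ 2, \ldots, n_1 + 1. $ Since $ {\cal A}_{\emptyset} $ contributes no transitions at state $ n_1 + 1 $ other than the added $ \epsilon $-transition, and since no transition inherited from $ {\cal A}_1 $ crosses into states $ \geq n_1 + 2, $ the run must remain in states $ 2, \ldots, n_1 + 1 $ until it escapes via the new $ \epsilon $-transition from $ n_1 + 1 $ to the accepting state. Deleting this final $ \epsilon $-step leaves a run of $ {\cal A}_1 $ from $ (1, \epsilon) $ to $ (n_1 + 1, w), $ whence $ w \in {\cal L}({\cal A}_1). $ The main obstacle is the bookkeeping around indices and making this case split airtight; the argument itself is elementary but relies on being precise about which transitions are available at each state and on the relative-index translation between the individual acceptors and their embedding in the union.
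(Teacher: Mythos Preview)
Your proof is correct and follows essentially the same approach as the paper: both directions are handled by translating runs between the union and the constituent acceptors via the relative-index shift, relying on the committing property to confine runs to the appropriate region. The only cosmetic difference is that in the forward direction the paper pivots on whether state $n_1+2$ ever appears in the accepting run, whereas you pivot on the first transition taken from state $1$; since state $1$ is never revisited, these two case splits are equivalent.
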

\begin{proof}
   As before, we use $ n_1 = \| {\cal A}_1 \| $ and $ n_2 = \| {\cal A}_2 \|. $
   Assume that $ w \in {\cal L}( {\cal A}_1 | {\cal A}_2 ). $
   By definition, $ (1,\epsilon) \vdash^{*} ( n_1 + n_2 + 2, w ) $.
   If state $ n_1 + 2 $ does not occur in this sequence, it must be the case 
   that the state $ n_1 + 1 $ occurs in the sequence, because the accepting 
   state is reachable only from $ n_1 + 1 $ or from states
   $ \geq n_1 + 2. $. This implies that $ w \in {\cal L}( {\cal A}_1) $. 
   Similarly, if state $ n_1 + 2 $ occurs in the sequence, then
   we note that $ n_1 + 2 $ originates from the initial state of
   $ {\cal A}_2. $ It follows that $ w \in {\cal L}( {\cal A}_2 ) $. 
   As a consequence, we have 
   $ {\cal L}( {\cal A}_1 | {\cal A}_2 ) \subseteq 
      {\cal L}( {\cal A}_1 \cup {\cal A}_2 ) $.

   Now assume that $ w \in {\cal L}( {\cal A}_1 ) \cup {\cal L}( {\cal A}_2 ) $.   
   If $ w \in {\cal L}( {\cal A}_1 ) $, we have 
   $ (1,\epsilon) \vdash^{*} ( n_1 + 1, w ) $ in $ {\cal A}_1 $. 
   In $ {\cal A}_1 | {\cal A}_2 $, this sequence can be extended to
   $ (1,\epsilon) \vdash^{*} ( n_1 + 1, w ) \vdash ( n_1 + n_2 + 2, w ) $. 

   If $ w \in {\cal L}( {\cal A}_2 ) $, we have 
   $ (1,\epsilon) \vdash^{*} ( n_1 + 1, w ) $ in $ {\cal A}_2 $. 
   In $ {\cal A}_1 | {\cal A}_2 $, this sequence becomes 
   $ (1,\epsilon) \vdash ( n_1 + 2, \epsilon ) \vdash^{*} ( n_1 + n_2 + 2, w ) $. 
   This implies that 
   $ {\cal L}( {\cal A}_1 \cup {\cal A}_2 ) \subseteq {\cal L}( {\cal A}_1 | {\cal A}_2 ) $. 
\end{proof}

\begin{definition}
   \label{Def_kleene_star}
   The Kleene star $ {\cal A}^{*} $ of $ {\cal A} $ is defined as
      \[ ( \, {\cal A}_{\emptyset} \circ 
              {\cal A} \circ {\cal A}_{\emptyset} \, ) 
          \{ \, 1 \rightarrow^{\epsilon} 2, \ 
              2 \rightarrow^{\epsilon} \| {\cal A} \| + 3, \
            \| {\cal A} \| + 2 \rightarrow^{\epsilon} 2 \, \}. \]
\end{definition}
\begin{theorem}
\label{Theorem_kleene_star}
   For every acceptor $ {\cal A}, $ the following holds: 
   \[ w \in {\cal L}({\cal A}^{*}) \mbox{ iff there exist }
   w_1, \ldots, w_k \ ( k \geq 0 ), \mbox{ s.t. }
      w = w_1 w_2 \cdots w_k \mbox{ and each } w_i \in {\cal L}({\cal A}). \]
\end{theorem}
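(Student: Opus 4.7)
The plan is to prove the biconditional by establishing both directions separately. Set $n = \|\mathcal{A}\|$ so that $\mathcal{A}^{*}$ has size $n+2$ with accepting state $n+3$; recall the added $\epsilon$-transitions are $1 \to 2$, $2 \to n+3$, and $n+2 \to 2$, and $\mathcal{A}$ itself occupies positions $2, \ldots, n+1$, flanked by the two $\mathcal{A}_{\emptyset}$ bookends at positions $1$ and $n+2$.

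For the $(\Leftarrow)$ direction, given a decomposition $w = w_1 \cdots w_k$ with each $w_i \in \mathcal{L}(\mathcal{A})$, I would stitch an accepting run of $\mathcal{A}^{*}$ by concatenation. Begin with the $\epsilon$-step $(1, \epsilon) \vdash (2, \epsilon)$; for each $i$, lift the run of $\mathcal{A}$ on $w_i$ (which exists by hypothesis) to a run $(2, w_1 \cdots w_{i-1}) \vdash^{*} (n+2, w_1 \cdots w_i)$ in $\mathcal{A}^{*}$ using the state shift of $+1$ justified by the relative-transition convention, and follow it with $(n+2, \cdot) \vdash (2, \cdot)$. Close with $(2, w) \vdash (n+3, w)$. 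The base case $k = 0$ is the short path $(1, \epsilon) \vdash (2, \epsilon) \vdash (n+3, \epsilon)$.

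The $(\Rightarrow)$ direction is the harder one. Given an accepting run $\rho : (1, \epsilon) \vdash^{*} (n+3, w)$, I would first pin down its endpoints: state $1$ has only the added $\epsilon$ to $2$ (the first $\mathcal{A}_{\emptyset}$ contributes no other transitions), so $\rho$ begins $(1, \epsilon) \vdash (2, \epsilon)$; and the only way to reach state $n+3$ is the added $\epsilon$ from state $2$ (state $n+2$'s only outgoing transition is its $\epsilon$ back to $2$, and by the committing property of $\mathcal{A}$ no transition inside $\mathcal{A}$ can reach past position $n+2$), so $\rho$ ends $(2, w) \vdash (n+3, w)$. Listing all visits to state $2$ in $\rho$ as $(2, u_0), \ldots, (2, u_k)$ with $u_0 = \epsilon$ and $u_k = w$, I would then show that each segment between consecutive visits has the shape $(2, u_{i-1}) \vdash^{*} (n+2, u_i) \vdash (2, u_i)$ with the first subrun confined to positions $2, \ldots, n+1$; this is because state $2$ has no incoming transitions from within $\mathcal{A}$ (committing), and state $n+2$ has no outgoing transition other than the added $\epsilon$ to $2$. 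Setting $w_i$ so that $u_i = u_{i-1} w_i$, each first subrun translates back (state shift of $-1$) to an accepting run of $\mathcal{A}$ on $w_i$, giving $w = w_1 \cdots w_k$ with each $w_i \in \mathcal{L}(\mathcal{A})$.

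The main obstacle I anticipate is the structural analysis in the $(\Rightarrow)$ direction: establishing that state $2$ is the unique entry point into the embedded copy of $\mathcal{A}$, and state $n+2$ the unique exit. This relies on carefully combining the committing property of $\mathcal{A}$ with the inertness of the two $\mathcal{A}_{\emptyset}$ bookends. Once this waypoint structure is nailed down, the rest of both directions is straightforward bookkeeping with relative transitions.
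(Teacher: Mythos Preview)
Your proposal is correct and follows essentially the same approach as the paper: both directions decompose the accepting run of $\mathcal{A}^{*}$ by its visits to state~$2$, argue that each segment must pass through state $n+2$ before returning (using that $\mathcal{A}$ cannot re-enter its initial state and that the $\mathcal{A}_{\emptyset}$ bookend at $n+2$ has only the added $\epsilon$-transition to~$2$), and translate each segment by the state shift $\pm 1$ to or from an accepting run of $\mathcal{A}$. Your structural analysis of the entry/exit points is in fact a bit more explicit than the paper's, but the argument is the same; one small terminological note is that the paper reserves ``committing'' for the no-return-to-initial-state property, whereas the upper bound (no transitions past $n+1$) is a separate clause of Definition~\ref{Def_acceptor}.
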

\begin{proof}
   In this proof, let $ n = \| {\cal A} \|. $ 
   First assume that $ ( 1, \epsilon ) \vdash^{*} ( n+3, w ). $ 
   By separating out the visits of state $ 2, $ we can
   write this sequence in the following form: 
   \[ ( 1, \epsilon) \vdash ( 2, \epsilon ) 
       \vdash^{*} (2,v_1) \vdash^{*} (2,v_2) \vdash^{*} \cdots
       \vdash^{*} (2,v_{k-1}) \vdash^{*} (2,v_k) \vdash^{*} (n+3,w), \]
   where each subsequence $ \vdash^{*} $ contains no visits to
   state $ 2. $ 
   For simplicity, set $ v_0 = \epsilon. $ 
   Then, for $ i $ with $ 1 \leq i \leq k, $ the word
   $ v_{i-1} $ is a prefix of $ v_i. $ 
   For $ 1 \leq i \leq k, $ define the difference $ w_i $ such that
   $ v_{i-1} \, w_i = v_i. $ 
   
   By construction of $ {\cal A}^{*}, $ 
   state $ 2 $ originates from the original acceptor
   $ {\cal A}. $ Hence $ (2,v_{i-1} ) \vdash^{*} ( 2,v_i ) $ implies
   that $ (2,v_{i-1}) \vdash^{*} (2+n, v_i ) \vdash (2,v_i ). $ 
   Since the sequence $ (2,v_{i-1}) \vdash^{*} (2+n, v_i ) $ must be
   completely within $ {\cal A}, $ it follows that
   $ w_i \in {\cal L}( {\cal A} ). $ 
   For the final sequence $ (2,v_k) \vdash^{*} (n+3,w), $
   it can be easily checked that the only transition to
   $ n+3 $ is an $ \epsilon $-transition from state $ 2. $ 
   Hence, we have $ (2,v_k) \vdash (n+3,w) $ and $ v_k = w. $ 
   Since we have $ w = w_1 \cdots w_k, $ this completes one direction
   of the proof. 

   For the other direction, assume we have 
   $ w_1, \ldots, w_k, $ s.t each $ w_i \in {\cal L}({\cal A}) $ 
   for some $ k \geq 0. $ 
   By definition, we have $ ( 1, \epsilon ) \vdash^{*} ( n + 1, w_i ) $ 
   in $ {\cal A}, $ 
   which implies that for every word $ v' \in \Sigma^{*}, $ we have
   $ ( 1, v' ) \vdash^{*} ( n + 1, v' w_i ) $ in $ {\cal A}. $ 
  
   \noindent 
   In $ {\cal A}^{*}, $ we have $ ( 2, v' ) \vdash^{*} ( n + 2, v' w_i ). $ 
   By combining and properly instantating the $ v', $ we obtain
   \[ (1,\epsilon) \vdash ( 2, \epsilon ) \vdash^{*} 
      (2, w_1 ) \vdash^{*} (2,w_1 w_2 ) \vdash^{*} \cdots \vdash^{*} 
      (2,w_1 w_2 \cdots w_k ) \vdash ( n + 3, w_1 w_2 \cdots w_k ), \]
   which completes the proof.  
\end{proof}

\noindent
At this point, we can define all other common regular operations.
For example $ {\cal A}^{+} $ can be defined as $ {\cal A} \circ {\cal A}^{*}, $
and $ {\cal A}^{?} $ can be defined as $ {\cal A} | {\cal A}_{\epsilon}. $ 
Since direct construction results in slightly smaller
acceptors, we still give the following definitions:
\begin{definition}
   \label{Def_two_more}
   Let $ {\cal A} $ be an acceptor. We define 
   the non-empty repetition $ {\cal A}^{+} $ as 
   \[ ( \, {\cal A}_{\emptyset} \circ {\cal A} \circ 
           {\cal A}_{\emptyset} \, )
       \{ \ 1 \rightarrow^{\epsilon} 2, \ 
         \| {\cal A} \| + 2 \rightarrow^{\epsilon} 2, \ 
         \| {\cal A} \| + 2 \rightarrow^{\epsilon} \| {\cal A} \| + 3 \ \}. \]
   We define the optional expression $ {\cal A}^{?} $ as 
       $ {\cal A} \{ \ 1 \rightarrow^{\epsilon} \| {\cal A} \| + 1 \ \}. $
\end{definition}
The construction of $ {\cal A}^{?} $ relies on the fact that
$ {\cal A} $ is committing. 
\begin{theorem}
\label{Theorem_two_more}
   For every acceptor $ {\cal A}, $ the following holds:
   \[ w \in {\cal L}({\cal A}^{+}) \mbox{ iff there exist }
   w_1, \ldots, w_k \ ( k \geq 1 ), \mbox{ s.t. }
      w = w_1 w_2 \cdots w_k \mbox{ and each } w_i \in {\cal L}({\cal A}). \]
   \[ {\cal L}( {\cal A}^{?} ) = {\cal L}( {\cal A} ) \cup \{ \epsilon \}. \] 
\end{theorem}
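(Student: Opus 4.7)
The plan is to prove both claims by adapting arguments from the proofs of Theorems~\ref{Theorem_concatenation}, \ref{Theorem_union} and \ref{Theorem_kleene_star}. Throughout, set $ n = \| {\cal A} \|. $

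For the first claim on $ {\cal A}^{+}, $ I would imitate the Kleene star proof almost verbatim. Assume $ w \in {\cal L}( {\cal A}^{+} ), $ so $ ( 1, \epsilon ) \vdash^{*} ( n + 3, w ). $ The only transition out of state $ 1 $ is the new $\epsilon$-transition to state $ 2, $ so the run must begin with $ (1,\epsilon) \vdash ( 2, \epsilon ). $ Crucially, in contrast with Kleene star, the $\epsilon$-transition $ 1 \rightarrow^{\epsilon} n + 3 $ is absent, so the accepting state cannot be reached without passing through $ {\cal A}. $ Separating out the visits of state $ 2, $ the run has the form
\[ (1,\epsilon) \vdash ( 2, \epsilon ) \vdash^{*} ( 2, v_1 ) \vdash^{*} \cdots \vdash^{*} ( 2, v_k ) \vdash^{*} ( n + 3, w ), \]
with $ k \geq 1, $ since the accepting state is only reachable from $ n + 2, $ which in turn is only reachable from inside $ {\cal A}. $ The rest of the argument is identical to the Kleene star case and yields a decomposition $ w = w_1 \cdots w_k $ with each $ w_i \in {\cal L}( {\cal A} ). $ For the converse, given such a decomposition, I would use the construction from the proof of Theorem~\ref{Theorem_kleene_star}, but finishing with the new $\epsilon$-transition $ n + 2 \rightarrow^{\epsilon} n + 3 $ rather than with $ 1 \rightarrow^{\epsilon} n + 3. $

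For the second claim, the construction $ {\cal A}^{?} $ only adds one $\epsilon$-transition from state $ 1 $ to the accepting state $ n + 1. $ If $ w \in {\cal L}( {\cal A} ), $ the existing accepting run in $ {\cal A} $ remains valid in $ {\cal A}^{?}; $ and $ w = \epsilon $ is accepted via the single transition $ (1,\epsilon) \vdash ( n + 1, \epsilon ). $ Conversely, suppose $ w \in {\cal L}( {\cal A}^{?} ) $ and consider an accepting run. If the new $\epsilon$-transition is not used, the run lies entirely within $ {\cal A}, $ so $ w \in {\cal L}( {\cal A} ). $ If it is used, then because $ {\cal A} $ is committing, state $ 1 $ is not reachable after the first step, so this transition must be taken first; and because the accepting state $ n + 1 $ has no outgoing transitions, the run must stop immediately, giving $ w = \epsilon. $

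The main obstacle, just as in Theorem~\ref{Theorem_kleene_star}, is the careful bookkeeping in the decomposition for $ {\cal A}^{+}, $ in particular verifying that each segment $ (2, v_{i-1}) \vdash^{*} (2, v_i) $ must visit state $ n + 2 $ in between and therefore corresponds to an accepting run of $ {\cal A} $ on the difference $ w_i. $ The $ {\cal A}^{?} $ part is essentially immediate once the committing property and the absence of outgoing transitions from the accepting state are invoked.
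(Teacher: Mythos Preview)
The paper states Theorem~\ref{Theorem_two_more} without proof, so there is nothing to compare against directly; your plan to adapt the argument of Theorem~\ref{Theorem_kleene_star} is exactly what the authors presumably have in mind, and your treatment of $ {\cal A}^{?} $ is complete and correct.

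For $ {\cal A}^{+} $ your strategy is right but the bookkeeping slips in one place. When you separate out the visits of state $ 2 $ and write
\[ (1,\epsilon) \vdash (2,\epsilon) \vdash^{*} (2,v_1) \vdash^{*} \cdots \vdash^{*} (2,v_k) \vdash^{*} (n+3,w), \]
the number $ k $ of \emph{returns} to state $ 2 $ may well be $ 0 $: the run $ (1,\epsilon)\vdash(2,\epsilon)\vdash^{*}(n+2,w)\vdash(n+3,w) $ never revisits state~$ 2 $. The reason you nonetheless obtain at least one factor is not the one you give. In Theorem~\ref{Theorem_kleene_star} the final segment $ (2,v_k)\vdash^{*}(n+3,w) $ is a single $ \epsilon $-step because the transition into $ n+3 $ comes from state~$ 2 $; in $ {\cal A}^{+} $ the only transition into $ n+3 $ comes from $ n+2 $, so that final segment is \emph{not} trivial: it passes once through $ {\cal A} $ and contributes an additional word $ w_{k+1} $ with $ v_k\,w_{k+1}=w $. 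Hence the decomposition has $ k+1\geq 1 $ pieces, not $ k $. You flag precisely this point as ``the main obstacle'' at the end, but the earlier sentence ``the rest of the argument is identical to the Kleene star case'' papers over it; here the last segment does real work, whereas in the Kleene-star proof it does none. Once you make this adjustment (or, equivalently, separate out visits to state $ n+2 $ instead of state $ 2 $), the argument goes through.
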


\noindent
Instead of the automaton in example~\ref{Ex_ident}, we can now write:
\[ {\cal A}[ \, ( \phi_{\geq a} \cap \phi_{\leq z} ) \cup
                ( \phi_{\geq A} \cap \phi_{\leq Z} ) \, ] 
   \circ 
   {\cal A}[ \, ( \phi_{\geq a} \cap \phi_{\leq z} ) \cup
                ( \phi_{\geq A} \cap \phi_{\leq Z} ) \cup
                ( \phi_{\geq 0} \cap \phi_{\leq 9} ) \cup
                ( \phi_{\geq \_ } \cap \phi_{\leq \_ } ) \, ]^{*}. \]

\section{Classifiers}
\label{Sect_classifier}

In order to obtain a complete tokenizer, it is not sufficient
to accept or reject a given input. Instead one must
classify input into different groups. We call an automata
that can classify a \emph{classifier}. 
Contrary to standard text books, like for example
\cite{Sipser2013}, we define determinization
and minimization on classifiers, not on acceptors. 
\begin{definition}
   \label{Def_classifier}
   Let $ ( \Sigma, < ) $ be an alphabet. 
   Let $ T $ be a non-empty set of \emph{token classes}. 
   A \emph{classifier over} $ \Sigma $ \mbox{into} $ T $ 
   is a non-empty, finite sequence
   \[ {\cal C} = \, ( \Lambda_1, \phi_1, t_1), \ldots,
      ( \Lambda_n, \phi_n, t_n) \ \ ( n \geq 1 ), \]
   where each $ \Lambda_i \subseteq {\cal Z}, $ 
   each $ \phi_i $ 
   is a border function from $ \Sigma $ to $ {\cal Z} \cup \{ \# \}, $ 
   and each $ t_i \in T. $ 
   We will often write $ \| {\cal C} \| $ for
   the size of $ {\cal C}. $ 
   We call $ {\cal C} $ deterministic if all
   $ \Lambda_i $ are empty.
\end{definition}
For representing transitions, we use the same conventions as
for acceptors, namely that transitions are stored relative, and 
$ \phi_i(\sigma) = \# $ means that no transition is possible. 
In contrast to acceptors, we allow transitions to state $ 1, $
and we forbid transitions to state $ n+1. $ 
Intuitively, a classifier is a non-deterministic automaton,
which looks for the longest run possible, and classifies
as $ t_i $ when it gets stuck in state $ i. $
We will make this more precise soon.

In order to obtain a classifier, we start with a trivial
classifier that classifies every input as error
(actually, this classifier defines what is an error),
and add pairs of acceptors and token classes. 

We always assume that state $ 1 $ defines the error class.
This is a reasonable choice, because no classifier can
classify $ \epsilon $ as a meaningful token. 

\begin{definition}
   Let $ T $ be a token class. 
   Let $ e,t \in T $ and let $ {\cal A} =
      ( \Lambda_1, \phi_1), \ldots, ( \Lambda_n, \phi_n ) $ 
   be an acceptor. 
   We define $ {\cal A}[e,t] $ as the classifier
   $ ( \Lambda_1, \phi_1, e ), \ldots, ( \Lambda_n, \phi_n, e ), \,
     ( \{ \ \}, \{ ( \sigma_{\bot}, \# ) \}, t ), $ 
   i.e. as the classifier that classifies words accepted
   by $ {\cal A} $ as $ t, $ and all other words as $ e. $ 

   \noindent
   Let $ e \in T. $ 
   We define $ {\cal C}_e = ( \{ \ \}, \{ ( \sigma_{\bot}, 0 ) \}, e ), $
   i.e. as the classifier that classifies every word as $ e. $ 

   \noindent
   For a classifier $ {\cal C} $ with first classification
   $ t_1, $ acceptor
   $ {\cal A}, $ and $ t \in T, $ we 
   define $ {\cal C}[ \ t : {\cal A} \ ] $ as 
   \[ {\cal C} \{ \ 1 \rightarrow^{\epsilon} \| {\cal C} \| + 1 \ \}
        \circ {\cal A}[t_1,t]. \]
   Here $ \circ $ denotes concatenation of acceptors. 
\end{definition}
The construction of $ {\cal C}[ \ t: {\cal A} \ ] $
appends $ {\cal A} $ to $ {\cal C} $ in such a way that 
words accepted by $ {\cal A} $ will be classified as $ t. $
Since acceptors accept by falling out of the automaton,
we need to add an additional state without outgoing
transitions, which will classify
words that are able to reach it as $ t. $ 
We also add an $ \epsilon $ transition from the first state
to the added acceptor. 
Words that cannot reach an accepting state of any of the
acceptors will be classified
as $ t_1, $ because
the classification of the first state is used as error classification. 

\begin{example}
   \label{Ex_classifier}
   Assume that we want to construct a classifier that
   classifies identifiers as I with the exception
   of `while', which should be classified as W. 
   Using the acceptors of Examples~\ref{Ex_ident} and \ref{Ex_while}, we can
   construct $ {\cal C}_E [ \ I : {\cal A}_{\rm id}, \ 
    W : {\cal A}_{\rm while} \ ] $ as 
   \[  
   \begin{array}{llll}  
      1: \ & \{ 1, 4 \} & \{ \ (c_{\bot}, 0 ) \ \} & E \\ 
      \\  
      2: & \emptyset &
            \{ \, (c_{\bot}, \# ), \,
             (A,1), \, (Z^{+1}, \# ), \, ( a,1 ), \, ( z^{+1}, \# ) \, \} 
               \hspace*{0.5cm} & E \\
      3: & \{ 1 \} & \{ \, (c_{\bot}, \# ), \,
                 (0,0), \, (9^{+1}, \# ), \, ( A,0 ), \, ( Z^{+1}, \# ), \\
         & & \hspace*{1.5cm} 
               ( {\_},0 ), \, ( {\_}^{+1}, \# ), \, (a,0), \, (z^{+1}, \# ) \, \}  
                & E \\
      4: & \emptyset & \{ \, (c_{\bot}, \# ) \, \} & I \\
      \\ 
      5: & \emptyset & 
            \{ \, (c_{\bot}, \# ),  (w, 1), ( w^{+1}, \# ) \, \} & E \\
      6: & \emptyset & 
            \{ \, (c_{\bot}, \# ),  (h, 1), ( h^{+1}, \# ) \, \} & E \\
      7: & \emptyset & 
            \{ \, (c_{\bot}, \# ),  (i, 1), ( i^{+1}, \# ) \, \} & E \\
      8: & \emptyset & 
            \{ \, (c_{\bot}, \# ),  (l, 1), ( l^{+1}, \# ) \, \} & E \\
      9: & \emptyset & 
            \{ \, (c_{\bot}, \# ),  (e, 1), ( e^{+1}, \# ) \, \} & E \\
      10: & \emptyset & 
            \{ \, (c_{\bot}, \# ) \, \} & W \\ 
   \end{array}
   \]
\end{example}
Without further restrictions,
the classifier above can classify
`while' either as I or as W. In order to avoid such ambiguity,
we always take the classification of the maximal 
(using $ < $ on natural numbers) reachable state that is not
an error state. 
In the current case, after reading 'while' the reachable states are
$ 1, 3, 4 $ and $ 10. $ 
Since $ 10 $ is the maximal state and its label is not $ t_1 = E, $ 
the classifier classifies as W.

Other solutions for solving ambiguity do not work well.
In particular using an order $ < $ on $ T $ is unpleasant. 
If $ T $ is an enumeration type, it is difficult to control how 
$ T $ is ordered. If $ T $ is a string type, its order is
determined by the lexicographic order, and it is tedious to override it. 

\noindent
Before we can make classification precise, we need to introduce
one technical condition. By default, the first state defines
the error state $ t_1. $ If from the first state
it is possible to reach a state $ i $ with $ t_i \not = t_1, $ 
we could possibly classify the word as non-error.
Whenever we encounter such a situation in real, it is due to
a mistake, mostly due to writing $ {\cal A}^{*} $ where
$ {\cal A}^{+} $ would have been required.
Hence, we will forbid such automata.

\begin{definition}
   A classifier $ {\cal C} $ is \emph{well-formed} if
   it does not allow a sequence
   $ ( 1, \epsilon ) \vdash^{*} ( i, \epsilon ) $ 
   with $ t_i \not = t_i. $ 
\end{definition}
The automaton in Example~\ref{Ex_classifier} is well-formed.
Changing $ t_2 $ into $ t_2 = I $ would make it ill-formed. 

\noindent
The following definition makes classification precise:
\begin{definition}
   \label{Def_run_classifier}
   For classifiers, we define configurations 
   as in Definition~\ref{Def_run}. 
   We also define $ \vdash $ and $ \vdash^{*} $ in the same way.
   
   We define \emph{classification}: 
   Classifying a word $ w \in \Sigma^{*} $ means obtaining a maximal prefix
   $ w' $ of $ w $ that is not classified as error ($t_1)$, 
   together with 
   the preferred classification of $ w'. $ 
   Let $ {\cal C} $ be a classifier, let $ w \in \Sigma^{*}. $ 
   Let $ w' $ be a maximal prefix of $ w, $ s.t.
   there exists a state $ i $ of $ {\cal C} $ with  
   $ ( 1, \epsilon) \vdash^{*} ( i, w' ) $ and $ t_i \not = t_1. $ 

   If no such state $ i $ exists, then the classification of $ w $ equals
   $ ( \epsilon, t_1 ). $ 

   If such a state exists, assume that $ i $ is the largest state 
   for which $ ( 1, \epsilon) \vdash^{*} ( i, w' ) $ and $ t_i \not = t_1. $ 
   In this case, the classification equals $ ( w', t_i ). $ 
\end{definition}

\section{Determinization}
\label{Sect_determinization}

\noindent
It is possible to run a non-deterministic classifier directly,
but it is inefficient in the long run
when many input words need to be classified. 
As with standard automata,
a non-deterministic classifier can be transformed
into an equivalent, deterministic
classifier. The construction is almost standard 
(See for example~\cite{Dragon2007,Autotheory2006,Sipser2013}),
but there are a few differences:
We perform the construction on classifiers instead
of acceptors, because that is what will be used in
applications, and we get generalization to character intervals for free,
because of the use of border functions. 
The advantage of border functions is that there is no need to distinguish
between starts and ends of intervals. The only 
points that need to be looked at are the borders. 
As a result the construction is 
only slightly more complicated than the standard approach, 
while at the same time working in practice without adaptation.
The following definition is completely standard: 
\begin{definition}
   \label{Def_eps_clos}
   Let $ {\cal C} $ be a classifier.
   Let $ S $ be a subset of its states. 
   We define \emph{the closure} 
   of $ S, $ written as $ {\rm CLOS}_{\cal C}(S) $ 
   as the smallest set of states $ S' $ with
   $ S \subseteq S', $ and 
   whenever $ i \in S' $ and $ j \in \Lambda_i, $ we have
   $ i + j \in S'. $  
\end{definition}
As said before, during determinization one only needs to consider the 
borders: 
\begin{definition}
   \label{Def_border}
   Let $ {\cal C} $ be a classifier
   defined over alphabet $ ( \Sigma, < ). $ 
   Let $ S $ be a non-empty
   set of states of $ {\cal C}. $
   We define 
   \[ {\rm BORD}_{\cal C}( S ) = 
      \{ \, \sigma \in \Sigma \ | \ \sigma \mbox{ is in the domain of a }
         \phi_i \mbox{ with } i \in S \, \}. \]
\end{definition}
$ {\rm BORD}_{\cal C}(S) $ is the set of symbols 
where the border function of one of the
states in $ S $ has a border. These are the points where
'something happens', and which have to be checked when constructing
the deterministic classifier.
In the classifier of Example~\ref{Ex_classifier}, we have
$ {\rm CLOS}_{\cal C}( \{ 1 \} ) = \{ 1,2,5 \} $ and
\[ {\rm BORD}_{\cal C}( \{ 1,2,5 \} ) = 
   \{ \, c_{\bot}, \, A, \, Z^{+1}, \, a, \, w, \, w^{+1}, \, z^{+1} \, \}. \] 

\noindent
Before we describe the determinization procedure,
we need a way of extracting classifications from 
sets of states:

\begin{definition}
   \label{Def_set_class}
   Let $ {\cal C} $ be a classifier. Let 
   $ S $ be a subset of its states. 
   We define $ {\rm CLASS}_{\cal C}(S) $ as follows:
   If for all $ i \in S, $ one has $ t_i = t_1, $ then 
   $ {\rm CLASS}_{\cal C}(S) = t_1. $ 
   Otherwise, let $ i $ be the maximal element
   in $ S $ for which $ t_i \not = t_1. $ 
   We define $ {\rm CLASS}_{\cal C}(S) = t_i. $ 
\end{definition}

\noindent
In example~\ref{Ex_classifier},
$ {\rm CLASS}_{\cal C}( \emptyset ) = 
  {\rm CLASS}_{\cal C}( \{ 1,2,3,5,6,7,8,9 \} ) = E, $ \ \ \ 
$ {\rm CLASS}_{\cal C}( \{ 4, 6, 7 \} ) = I, $ and
$ {\rm CLASS}_{\cal C}( \{ 3, 4, 10 \} ) = W. $  

Now we are ready to define the determinization procedure.
It constructs a deterministic classifier
$ {\cal C}_{\rm det} $ from $ {\cal C}. $
\begin{definition}
   \label{Def_determinization}
   The determinization procedure
   maintains a map $ H $ that maps subsets of 
   states of $ {\cal C} $ that we have discovered into 
   natural numbers. 
   It also maintains a map $ S_i $ that is the inverse of $ H, $ 
   so we always have $ S_{H(S)} = S. $ 
   \begin{enumerate}
   \item
      Start by setting $ H( \ {\rm CLOS}_{\cal C}( \{ 1 \} ) \ ) = 1, $ 
      and by setting $ S_1 = {\rm CLOS}_C( \, \{ 1 \} \, ). $ 
   \item
      Set $ {\cal C}_{\rm det} = ( \, ). $ 
   \item
      As long as $ \| {\cal C}_{\rm det} \| < \| H \|, $ 
      repeat the following steps: 
   \item
      Let $ i = \| {\cal C}_{\rm det} \| + 1. $ 
      Append $ ( \{ \, \}, \{ \, \}, 
                 {\rm CLASS}_{\cal C}( S_i ) \, ) $ to
      $ {\cal C}_{\rm det}. $ 
   \item
      For every $ \sigma \in {\rm BORD}_{\cal C}(S_i), $ do the following:
      \begin{itemize}
      \item
         Let  
         $ S' = \{ \, s + \phi^{\leq}_s( \sigma ) \ | \
               s \in S \mbox{ and } 
                      \phi^{\leq}_s( \sigma ) \not = \# \, \}. $ 
         ($ \phi_s $ is the border function of state $ s. $)
      \item
         If $ S' = \emptyset, $ then extend $ \phi_i $ 
         by setting 
         $ \phi_i(\sigma) = \#. $ Skip the remaining steps.
      \item
         Set $ S'' = {\rm CLOS}_{\cal C}( S' ). $  
      \item
         If $ S'' $ is not in the domain of $ H, $ 
         then add $ H( S'' ) = \| H \| + 1 $ to $ H, $ 
         and set $ S_{\| H \| + 1} = S''. $
      \item
         At this point, we are sure that $ H(S'') $ is defined. 
         Extend $ \phi_i $ by setting $ \phi_i( \sigma) = H( S'' ). $ 
      \end{itemize}
   \end{enumerate}
\end{definition}
As usual, $ H $ and $ S $ can be discarded when the construction
of $ {\cal C}_{\rm det} $ is complete.
It is easily checked that $ {\cal C} $ is deterministic,
because all its $ \Lambda_i $ are empty.

\begin{theorem}
   \label{Theorem_determinization}
   Let $ {\cal C}$ be a classifier that is well-formed,
   and $ {\cal C}_{\rm det} $ be the classifier 
   constructed from $ {\cal C}$ by using the determinization procedure 
   of Definition~\ref{Def_determinization}. 
   For every word $ w \in \Sigma^{*}, $
   if $ {\cal C} $ classifies $ w $ as $ ( w', t' ), $ and
   $ {\cal C}_{\rm det} $ classifies $ w $ as $ (w'',t''), $ then
   $ w' = w'' $ and $ t' = t''. $ 
\end{theorem}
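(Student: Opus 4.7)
The plan is to establish a simulation invariant between $ {\cal C} $ and $ {\cal C}_{\rm det}, $ from which the theorem follows by matching classifications state by state. For every $ u \in \Sigma^{*}, $ define
$ R(u) = {\rm CLOS}_{\cal C}( \, \{ \, j \ | \ (1,\epsilon) \vdash^{*} (j, u) \mbox{ in } {\cal C} \, \} \, ). $
I would prove by induction on $ |u| $ that either $ R(u) $ is empty, in which case $ u $ drives $ {\cal C}_{\rm det} $ into a dead end (some transition along the way is undefined), or there exists a unique state $ i(u) $ such that $ (1,\epsilon) \vdash^{*} (i(u), u) $ in $ {\cal C}_{\rm det} $ and $ S_{i(u)} = R(u). $

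For the base case $ u = \epsilon, $ Step 1 of Definition~\ref{Def_determinization} sets $ S_{1} = {\rm CLOS}_{\cal C}( \{ 1 \} ) = R(\epsilon), $ and $ i(\epsilon) = 1 $ since $ {\cal C}_{\rm det} $ is deterministic. For the inductive step, assume $ S_{i(u)} = R(u) $ and consider $ u \sigma $; Step~5 tells us that $ \phi_{i(u)}^{\leq}(\sigma) $ in $ {\cal C}_{\rm det} $ equals $ H(S'') $ for $ S'' = {\rm CLOS}_{\cal C}( \, \{ s + \phi_s^{\leq}(\sigma) \ | \ s \in S_{i(u)}, \ \phi_s^{\leq}(\sigma) \not = \# \} \, ), $ and unfolding the transition relation of Definition~\ref{Def_run} shows that $ S'' = R(u\sigma). $ The principal subtlety, which I expect to be the main technical point, is that the construction only inspects symbols $ \sigma \in {\rm BORD}_{\cal C}(S_{i(u)}) $ rather than all of $ \Sigma. $ One must argue that the border function $ \phi_{i(u)} $ of $ {\cal C}_{\rm det}, $ defined only at these borders, nevertheless induces (through its $ \phi^{\leq} $ semantics) the correct successor state for every $ \sigma \in \Sigma. $ Concretely: for each $ s \in S_{i(u)}, $ the value $ \phi_s^{\leq}(\sigma) $ depends only on $ \sigma^{\leq} $ taken in $ s $'s own border function, and inheriting transitions through the greatest border not exceeding $ \sigma $ at the level of $ {\cal C}_{\rm det} $ produces the same set $ S''. $ This is where the border function formalism pays off.

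From the invariant, the theorem follows. By Step~4, the label at state $ i(u) $ in $ {\cal C}_{\rm det} $ is $ {\rm CLASS}_{\cal C}( S_{i(u)} ) = {\rm CLASS}_{\cal C}( R(u) ), $ which by Definition~\ref{Def_set_class} is precisely the preferred non-error label selected by $ {\cal C} $ after reading $ u, $ or $ t_{1} $ when only error-labelled states are reachable. Well-formedness is invoked here to guarantee that the initial epsilon-closure $ R(\epsilon) $ contributes only $ t_{1} $-labelled states, so that the $ (\epsilon, t_{1}) $ fallback agrees on both sides. Consequently, the maximal prefix $ w' $ of $ w $ such that $ R(w') $ contains some state with label $ \not = t_{1} $ coincides with the maximal prefix driving $ {\cal C}_{\rm det} $ to a state labelled $ \not = t_{1}, $ and the two classifiers agree on the selected label, yielding $ w' = w'' $ and $ t' = t''. $
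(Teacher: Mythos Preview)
Your proposal is correct and follows essentially the same approach as the paper: establish by induction the simulation invariant that the unique state reached by $ {\cal C}_{\rm det} $ on input $ u $ is $ H(R_u) $ (your $ S_{i(u)} = R(u) $), observe via Step~4 that its label is $ {\rm CLASS}_{\cal C}(R_u), $ use well-formedness to align the error class $ t_1 $ on both sides, and then match the maximal non-error prefix. You are in fact more explicit than the paper about the border-function subtlety in Step~5 (that sampling only at $ {\rm BORD}_{\cal C}(S_{i(u)}) $ suffices for all $ \sigma $), which the paper's sketch subsumes under ``mostly standard.''
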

\begin{proof}
   The proof is mostly standard,
   and we sketch only the points where it differs from the standard proof. 
   Because $ {\cal C} $ is well-formed, we 
   have $ t_1 = t_{{det},1}, $ which means that both classifiers
   will use the same token class as error class. 

   \noindent 
   For every word $ w \in \Sigma^{*}, $ define the 
   set $ R_w = \{ \, r \in \{ \, 1, \ldots, \| {\cal C} \| \, \} \ | \ 
      (1,\epsilon) \vdash^{*} ( w, r ) \, \}. $ 
   These are the set of states that classifier $ {\cal C} $ can
   reach while reading $ w. $ 

   \noindent
   Also define the relation $ \delta_{det}(w,i) $ as
   $ ( \epsilon, 1 ) \vdash^{*} ( w, i ). $
   (Classifier $ {\cal C}_{det} $ reaches state $ i $ while
   reading $ w. $) 
   
   \noindent
   It can be proven by induction, that 
   \begin{enumerate}
   \item
      if $ R_w \not = \emptyset, $ then 
      $ \delta_{det}(w,i) $ implies $ i = H(R_w). $ 
      If $ R_w = \emptyset, $ then there is no $ i, $ s.t.
      $ \delta_{det}(w,i). $ 
   \item
      if $ R_w \not = \emptyset, $ then
      $ \delta_{det}(w,i) $ implies $ t_{det,i} = {\rm CLASS}(R_w). $  
   \end{enumerate}

   \noindent
   Now we can look at the classification of an
   arbitrary word $ w \in \Sigma^{*}. $ 
   If for all prefixes $ w' $ of $ w, $ we have
   $ {\rm CLASS}(R_{w'}) = t_1, $ then $ {\cal C} $ 
   will classify $ w $ as $ ( \emptyset, t_1). $ 
   If for some prefix there exists an $ i', $ s.t.
   $ {\delta}_{\det}(w',i') $ holds, we have
   $ t_{{det},i'} = {\rm CLASS}(R_{w'}) = t_1 $ by {\bf (2)}, 
   so that $ {\rm CLASS}(R_{w'}) = t_{\det,1}. $ 
   It follows that $ {\cal C}_{det} $ also classifies
   $ w $ as $ ( \emptyset, t_1 ). $ 

   If there exists a prefix $ w' $ of $ w $ 
   for which $ {\rm CLASS}(R_{w'}) \not = t_1, $ then let
   $ w' $ be the largest such prefix.
   There exists exactly one $ i', $ s.t.
   $ {\delta}_{\det}(w',i') $ holds, and by {\bf (2)} again,
   we have 
   $ t_{det,i'} = {\rm CLASS}(R_{w'}), $ which is not
   equal to $ t_{det,1}. $ 
  
   Because $ w' $ was chosen maximal, it follows that 
   for all words $ w'' \not = w' $ s.t. $ w' $ is a prefix of $ w'' $ and
   $ w'' $ is a prefix of $ w, $ either we have
   $ R_{w''} = \emptyset $ or $ {\rm CLASS}(R_{w''}) = t_1. $ 
   In both cases, there is no $ i'', $ s.t.
   $ (1,\epsilon) \vdash (i'',w'') $ and $ t_{det,i''} $
   in classifier $ {\cal C}_{det}. $ 
   In the former case, no $ i'' $ exists exists at all, 
   and in the latter case, 
   $ {\delta}_{\det}(w'', i'' ) $ holds, and we
   have $ t_{{det},i''} = {\rm CLASS}(R_{w''}) = t_1. $

   As a consequence, both $ {\cal C} $ and $ {\cal C}_{det} $ will
   classify $ w $ as $ ( w', {\rm CLASS}(R_{w'}) \, ). $  
\end{proof}

\section{State Minimization} 
\label{Sect_minimization}

It is well-known that for every regular language
there exists a unique deterministic automaton with
minimal number of states (See \cite{Dragon2007} Section~3.9,
or \cite{Autotheory2006} Section~4.4.3). The minimal
automaton can be obtained in time $ O(n.\log(n)) $
from any deterministic automaton
by means of Hopcroft's algorithm (\cite{Hopcroft_min_1971}). 

Altough it probably has minimal impact on performance,
minimization has a suprising effect on the size of the classifier. 
It turns out that on 
classifiers obtained from realistic programming
languages, the number of states decreases by $ 30/40 \%. $

It is straightforward to adapt Hopcroft's algorithm to
classifiers. We sketch the implementation below. 
The algorithm takes a deterministic classifier $ {\cal C} $ as input,
and constructs the smallest (in terms of equivalence classes)
partition on the states of $ {\cal C}, $ s.t. 
$ i \equiv j $ implies $ t_i = t_j $ and 
for every $ \sigma \in \Sigma, $ \ 
$ i + \phi^{\leq}_i(\sigma) \equiv j + \phi^{\leq}_j(\sigma). $
(We are implicitly assuming that $ \# \equiv \# $ and $ \# \not = i. $)
Once one has the partition, the automaton can be minimized
by selecting one state from each partition.

\begin{definition}
   \label{Def_min_data}
   We use an array $ (P_1, \ldots, P_p) $ for storing the 
   current state partition.
   We have 
   $ \bigcup_{1 \leq i \leq p} P_i = \{ \, 1, \ldots, \| {\cal C} \| \, \} $ 
   and $ i \not = j \Rightarrow P_i \cap P_j = \emptyset. $ 

   In addition to the partition $ (P_1, \ldots, P_p), $ we use 
   an index map $ I $ 
   that maps states to their partition, i.e. for
   every state $ i \, ( 1 \leq i \leq \| {\cal C} \| ), $ 
   we have $ i \in P_{I_i}. $ 
   
   The initial partition is obtained from a function
   $ f $ with domain $ \{ 1, \ldots, \| {\cal C} \| \, \} $
   and arbitrary range. 
   States $ i $ and $ j $ are put in the same class iff $ f(i) = f(j). $  
\end{definition}

We tried two initialization strategies: The first strategy
is simply taking
$f(r) = t_r, $ which means that two states will be equivalent
if they have the same classification. 
The second is an adaptation of a heuristic in 
\cite{Efficient_depth2016} 
that takes paths to possible future classifications into account.
We discuss it in more detail shortly. 

Due to the use of border functions
instead of intervals, Hopcroft's algorithm needs only
minor adaptation for classifiers in our representation.
We give the algorithm: 
\begin{definition}
   \label{Def_hopcroft_algorithm}
   First create an array $ B $ of back transitions. 
   For every state $ i $ with $ 1 \leq i \leq \| {\cal C} \|, $  \ \
   $ B(i) $ is the set of states that have a transition into $ i, $ i.e. \\
   $ B(s) = \{ \, j \ | \ 1 \leq j \leq \| {\cal C} \| 
         \mbox{ s.t. there exists a } \sigma \in \Sigma^{*}, \mbox{ s.t. } 
         \phi_j(\sigma) \not = \# \mbox{ and }
         j + \phi_j(\sigma) = i \, \}. $ 

   \noindent
   Construct the initial partition $ P = ( P_1, \ldots, P_p) $ from
   the chosen initialization function $ f. $ 
   Initialize the index array $ I $ from $ P. $  
   Create a stack $ U = ( 1, \ldots, p ) $ of indices. 
   The variable name $ U $ stands for \emph{unchecked}. 
    
   \begin{enumerate}
   \item
      As long as $ U $ is non-empty, pop an
      element from $ U, $ call it $ u, $ and do the following: 
   \item
      Construct $ S = \bigcup_{i \in P_u} B(i). $
      This is the set of states that have a transition
      into a state $ i \in P_u. $
   \item
      For every $ \sigma \in {\rm BORD}_{\cal C}(S) $ do: \ 
      Construct $ F_{\sigma} = 
      \{ \, i \in S \mid \phi^{\leq}_i(\sigma) \not = \# \mbox{ and }
                            i + \phi^{\leq}_i(\sigma) \in P_u \, \}. $ 
      Refine $ P,I,U $ with $ F_{\sigma}. $ 
    
      ($ F_{\sigma} $ is the set of states whose 
      $ \sigma $-transition goes into a state in $ P_u $) 
   \end{enumerate}

   \noindent
   The \emph{refinement operation} is defined as follows:
   Assume that we want to refine $ P,I,U $ by a set of states $ F. $ 
   For every $ P_i, $ s.t. $ P_i \cap F \not = \emptyset $ and
   $ P_i \not \subseteq F, $ do the following:
   \begin{enumerate}
   \item
      \label{Def_refine_sim}
      Construct $ N = P_i \backslash F $ and replace
      $ P_i $ by $ P_i \cap F. $ 
   \item
      If this results in $ \| P_i \| < \| N \|, $ 
      then exhange $ N $ and $ P_i. $ 
   \item
      Append $ N $ to $ (P_1, \ldots, P_i, \ldots, P_p), $ 
      and assign $ I(i) = p+1, $ for $ i \in N. $ 
      Add $ (p+1) $ to $ U. $  
   \end{enumerate}
   The intuition of refinement is the fact that
   if some $ P_i $ partially lies inside $ F $ and partially
   outside $ F, $ then $ P_i $ needs to be split. 
   
\end{definition}

\noindent
When the final partition $ (P_1, \ldots, P_p) $ has been obtained,
it is trivial
to construct the quotient classifier 
$ {\cal Q} = {\cal C} / (P_1, \ldots, P_p). $ 

It is essential that $ 1 \in P_1 $ because 
Definition~\ref{Def_classifier} and Definition~\ref{Def_run_classifier} 
treat $ t_1 $ as the error state. 
This can be easily obtained by sorting 
$ (P_1, \ldots, P_p) $ by their minimal element before
constructing the quotient classifier. 
An additional advantage of sorting is that it improves readability, because
it preserves more of the structure of the original classifier.

Both \cite{AroundHopcroft2006} 
and \cite{OnHopCroft2009} agree that $ U $ should
be implemented as stack, as opposed to a queue. 

Although Hopcroft's algorithm is theoretically optimal, it can 
be improved by a preprocessing stage. In the early stage
of the algorithm, all states that classify as error will be in a single
equivalence class. This equivalence class is gradually refined into 
smaller classes dependent on possible computations originating
from these classes. 
Although the number of steps is limited by the number of states in the class,
it may still be costly because the initial class is big. 

The initial refinements can be removed by using 
a preprocessing stage.
In \cite{Efficient_depth2016}, a filter 
for simple, deterministic automata is proposed that marks states with the 
shortest distance towards an accepting state. This can be done
in linear time. In order to adapt this approach
to classifiers, one has to include the accepted token in the markings.

\begin{definition}
   Let $ {\cal C} $ be a classifier from alphabet 
   $ (\Sigma, < ) $ into token set $ T. $ 
   A \emph{reachability function} $ \rho $ is a total 
   function from 
   $ \{ 1, \ldots, \| {\cal C} \} $ to partial
   functions from $ T $ to $ {\cal N}. $
\end{definition}
Intuitively, $ \rho(i)(t) = n $ means that 
there exists a path of length $ n $ from $ i $ to a state 
$ j $ with $ t_j = t. $ 

Although theoretically, the total size of $ \rho $ could be 
quadratic in the size of $ {\cal C}, $ 
in all cases that we encountered, all states except
for the initial state, can reach only a few token classes. 

Our goal is to compute the optimal reachability function 
and use it to initialize the first partition. 
This can be done with Dijkstra's algorithm.

\begin{definition}
   \label{Def_comp_reachability} 
   Start by setting
   $ \rho(i) = \{ \, (t_i,0) \, \}, $ for every state $ i $ 
   that has $ t_i \not = t_1. $ 
   (Every state can reach its own classification in $ 0 $ steps.) 
   Set $ \rho(i) = \{ \, \} $ for the remaining states 
   (that classify as error).

   \noindent
   Create a stack $ U = ( \, 1, \ldots, \| {\cal C} \| \, ) $ 
   of unchecked states. 
   \begin{itemize}   
   \item
      While $ U $ is not empty, pick and remove a state
      from $ U, $ call it $ u, $ and do the following:
   \item
      For every $ i \in B(u), $ for every
      $ ( t, n ) \in \rho(u) $ do the following:
      If $ \rho(i)(t) $ is undefined, insert $ ( t,n+1 ) $ to
      $ \rho(i). $ 
      Otherwise, if $ \rho(i)(t) = n', $ set 
      $ \rho(i)(t) = \min( n', n + 1 ). $ 

      \noindent
      If this results in a change of $ \rho(i), $ then
      add $ i $ to $ U. $ 
   \end{itemize}
\end{definition}
Using $ \rho $ to initialize the partition in 
Definition~\ref{Def_hopcroft_algorithm} works well in practice. 
In most cases, the first partition is also the final partition.
We end the section with an example of a reachability
function for a simple classifier that classifies
identifiers and the reserved word 'for': 

\begin{example}
   \label{Ex_reachability_while_for}
   Consider the following deterministic classifier that classifies
   identifiers (for simplicity only lower case and digits),
   and the reserved word 'for': 
   \[
      \begin{array}{llll}
         1: & \emptyset & \{ \, ( c_{\bot}, \# ), \, ( a, 1 ), \,  ( f, 2 ), \, ( g,1 ), \, ( z^{+1}, \# ) \, \} & E \\ 
         2: & \emptyset & \{ \, ( c_{\bot}, \# ), \, ( 0, 2 ), \,  ( 9^{+1}, \# ), \, ( a, 3 ), \, ( z^{+1}, \# ) \, \} & I \\
         3: & \emptyset & \{ \, ( c_{\bot}, \# ), \, ( 0, 1 ), \,  ( 9^{+1}, \# ), \, ( a, 2 ), \, ( o, 3 ), \, ( p, 2 ), \, ( z^{+1}, \# ) \, \} & I \\
         4: & \emptyset & \{ \, ( c_{\bot}, \# ), \, ( 0, 0 ), \,  ( 9^{+1}, \# ), \, ( a, 1 ), \, ( z^{+1}, \# ) \, \} & I \\ 
         5: & \emptyset & \{ \, ( c_{\bot}, \# ), \, ( 0, -1 ), \, ( 9^{+1}, \# ), \, ( a, 0 ), \, ( z^{+1}, \# ) \, \} & I \\
         6: & \emptyset & \{ \, ( c_{\bot}, \# ), \, ( 0, -2 ), \, ( 9^{+1}, \# ), \, ( a, -1 ), \, ( r,1 ), \, ( s, -1 ), \, ( z^{+1}, \# ) \, \} & I \\
         7: & \emptyset & \{ \, ( c_{\bot}, \# ), \, ( 0, -3 ), \, ( 9^{+1}, \# ), \, ( a, -2 ), \, ( z^{+1}, \# ) \, \} & F \\
      \end{array} 
   \]
   This classifier was constructed by the determinization procedure. 
   If one initializes the partition with $ f(i) = t_i, $ 
   the initial partition will be $ 
   ( \{ 1 \}, \{ 2,3,4,5,6 \}, \{ 7 \} ). $ 
   The optimal reachability function has 
   \[ \begin{array}{ll}
         \rho(1) = & \{ \, \, (I,1), \, (F,3) \, \} \\ 
         \rho(2) = \rho(4) = \rho(5) = & \{ (I,0) \, \} \\
         \rho(3) = & \{ \, (I,1), \, (F,2) \, \} \\
         \rho(6) = & \{ \, (I,1), \, (F,1) \, \} \\
         \rho(7) = & \{ \, (I,1), \, (F,0) \, \} \\
       \end{array}
   \] 
   The minimal classifier has $ 5 $ states, so the initial
   partition based on $ \rho $ is already the final partition. 
\end{example}

\section{Conclusions and Future Work}
\label{Sect_concl_fut}

We have introduced a way of representing finite
automata which uses relative state references and border functions. 
Border functions make it possible to concisely represent
interval-based transition functions. 
Our representation is more complicated than the
standard representation in text books (like \cite{Dragon2007,Sipser2013})
and the proofs are slightly harder, but the algorithms are not, 
and the representation can be used in practice without further adaptation. 
We have implemented our representation and used it
in practice. We gave a presentation about it,
together with our parser generation tool, at
the $ C^{++} $~Now conference.
The implementation is available from \cite{CompilerTools}. 

On the practical level, we make the threshold for using
our automated tools as low as possible.
In the simplest case,
one compiles the library, defines a classifier in code by means
of regular expressions, and calls a default function for 
classification. Constructing classifiers in code has the
advantage that the user does not need to learn a dedicated
syntax, and that construction of classifiers has full flexibility. 

Our implementation does not construct a complete tokenizer. 
This is important, because in our experience
this is the obstacle that stopped us from using an existing
tokenizer generator tool. 
There is always something in the language 
that cannot be handled by an automatically generated tokenizer. 
Therefore, in our implementation,
we automated only the classification process, 
and leave all remaining implementation to the user. 
In practice, not much additional code needs to be written. 
If one needs efficiency, one can create an executable classifier in 
$ C^{++}. $
Both the default classifier and the $ C^{++} $ classifier can be 
compiled with any input source which satisfies a small set
of interface requirements. 

In the future, we plan to look into full Boolean operations
(extend regular expressions with intersection and negation), 
or more advanced
matching techniques, as specified by POSIX.

The final point that needs consideration is the use of
compile time computation. 
Compile time computation was
introduced in $ C^{++} $-11 with the aim of allowing more general 
functions in declarations, 
primarily for the computation of the size of a fixed-size array. 
Since then, the restrictions on compile time computation
have gradually been relaxed, 
and nowadays, it is possible to convert a regular expression
represented as an array of characters into a table-based DFA
at compile-time. 
This was implemented in the CTRE library (\cite{CTRE2019}). 
We did not try to make our implementation suitable for
compile time computation, because it 
would result in reduced expressivity in the code that
constructs the acceptors. In addition, the experiments 
with 
RE2C imply that directly coded automata are an order of magnitude
faster than table-based automata (\cite{mythsfacts1998}). 

\section{Acknowledgements}

This work gained from comments by Witold Charatonik and
Cl{\'a}udia Nalon.
We thank Nazarbayev University for
supporting this research through the Faculty Development Competitive
Research Grant Program (FDCRGP) number 021220FD1651.

\bibliographystyle{eptcs}
\bibliography{flatautomata}

\end{document}